\definecolor{blue}{rgb}{0.1,0.2,0.5}
\definecolor{brown}{rgb}{0.6,0.6,0.2}
\theoremstyle{plain}
\newtheorem{theorem}{Theorem}
\newcommand{\newtheoremwithcrefformat}[2]{%
  \newtheorem{#1}[theorem]{#2}%
  \crefformat{#1}{##2\MakeUppercase#1~##1##3}%
  \Crefformat{#1}{##2\MakeUppercase#1~##1##3}%
}
\newcommand{\newseptheoremwithcrefformat}[2]{%
  \newtheorem{#1}{#2}%
  \crefformat{#1}{##2\MakeUppercase#1~##1##3}%
  \Crefformat{#1}{##2\MakeUppercase#1~##1##3}%
}
\theoremstyle{nonumberplain}
\newtheorem{proof}{Proof}
\def\cqedsymbol{\ifmmode$\lrcorner$\else{\unskip\nobreak\hfil
\penalty50\hskip1em\null\nobreak\hfil$\lrcorner$
\parfillskip=0pt\finalhyphendemerits=0\endgraf}\fi}
\tikzset{
    position/.style args={#1:#2 from #3}{
        at=(#3.#1), anchor=#1+180, shift=(#1:#2)
    }
}
\newcommand{\Bc}{\mathcal{B}}
\newcommand{\Oh}{\mathcal{O}}
\newcommand{\eps}{\varepsilon}
\let\originalleft\left
\let\originalright\right
\renewcommand{\left}{\mathopen{}\mathclose\bgroup\originalleft}
\renewcommand{\right}{\aftergroup\egroup\originalright}
\renewcommand{\leq}{\leqslant}
\renewcommand{\geq}{\geqslant}
\newcommand{\MIS}{\textsc{Maximum Weight Independent Set}\xspace}
\newcommand{\MIM}{\textsc{Maximum Weight Induced Matching}\xspace}
\newcommand{\Col}{\textsc{3-Coloring}\xspace}
\newcommand{\wei}{\mathfrak{w}}
\begin{document}

\title{\Large Quasi-polynomial-time algorithm for Independent Set in $P_t$-free graphs via shrinking the space of induced paths}

\author{
Marcin{}~Pilipczuk\thanks{
  Institute of Informatics, University of Warsaw, Poland, \texttt{malcin@mimuw.edu.pl}.
  This work is 
a part of project CUTACOMBS that has received funding from the European Research Council (ERC) 
under the European Union's Horizon 2020 research and innovation programme (grant agreement No.~714704).
}
\and
Micha\l{}~Pilipczuk\thanks{
  Institute of Informatics, University of Warsaw, Poland, \texttt{michal.pilipczuk@mimuw.edu.pl}.
  This work is 
a part of project TOTAL that has received funding from the European Research Council (ERC) 
under the European Union's Horizon 2020 research and innovation programme (grant agreement No.~677651).
}
\and
Pawe\l{} Rz\k{a}\.{z}ewski\thanks{
  Faculty of Mathematics and Information Science, Warsaw University of Technology, Poland, and Institute of
Informatics, University of Warsaw, Poland, \texttt{p.rzazewski@mini.pw.edu.pl}.
Supported by Polish National Science Centre grant no. 2018/31/D/ST6/00062.
}
}

\date{}

\begin{titlepage}
\def\thepage{}
\thispagestyle{empty}
\maketitle

\begin{textblock}{20}(0, 11.7)
\includegraphics[width=40px]{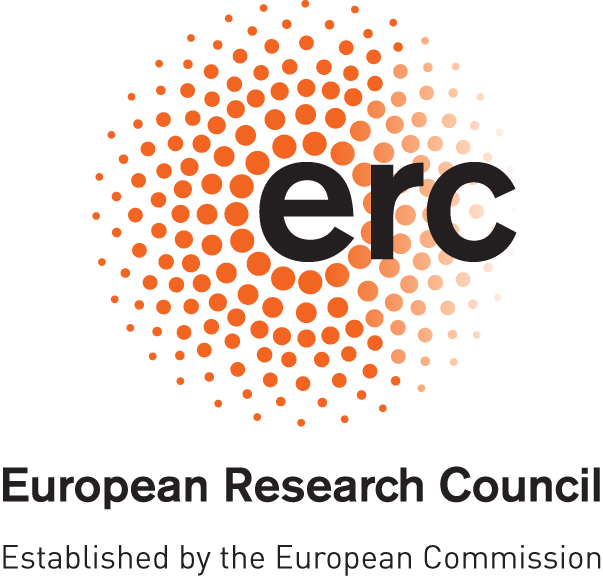}%
\end{textblock}
\begin{textblock}{20}(-0.25, 12.1)
\includegraphics[width=60px]{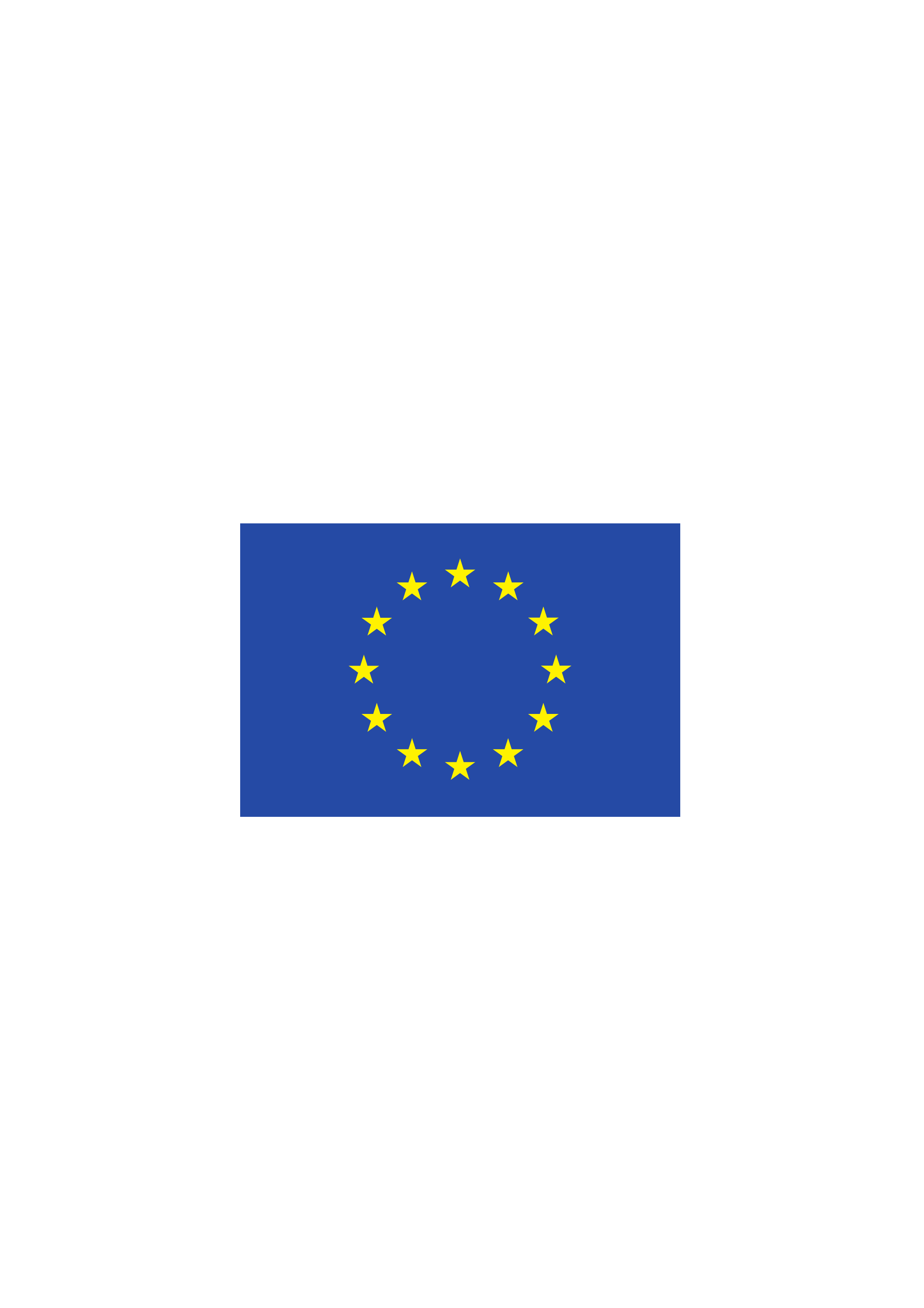}%
\end{textblock}
\begin{abstract}
In a recent breakthrough work, Gartland and Lokshtanov [FOCS 2020] showed
a quasi-polynomial-time algorithm for \textsc{Maximum Weight Independent Set} in $P_t$-free graphs, that is, 
graphs excluding a fixed path as an induced subgraph. Their algorithm runs in time $n^{\Oh(\log^3 n)}$, where $t$ is assumed to be a constant.

Inspired by their ideas, we present an arguably simpler algorithm
with an improved running time bound of $n^{\Oh(\log^2 n)}$. 
Our main insight is that a connected $P_t$-free graph always contains a vertex $w$ whose neighborhood intersects,
for a constant fraction of pairs $\{u,v\} \in \binom{V(G)}{2}$, a constant fraction of induced $u-v$ paths. 
Since a $P_t$-free graph contains $\Oh(n^{t-1})$ induced paths in total, branching on  such a vertex and recursing
independently on the connected components leads to a quasi-polynomial running time bound.

We also show that the same approach can be used to obtain quasi-polynomial-time algorithms for related problems,
including \textsc{Maximum Weight Induced Matching} and \textsc{3-Coloring}.

\end{abstract}

\end{titlepage}

\section{Introduction}
Understanding the boundary of tractability of fundamental graph problems, depending on restrictions put on the input graph,
has been a very active area of research in the last three decades. 
A methodological way of studying restrictions on the input graph is to focus on \emph{hereditary} graph classes, that is,
graph classes closed under vertex deletion. 
From this point of view, one starts with studying {\em{$H$-free graphs}}: graphs excluding one fixed graph $H$ as an induced subgraph. The goal is to classify for which graphs $H$, the problem at hand admits an efficient algorithm in $H$-free graphs.

Arguably, one of the most intriguing cases is $H =P_t$, where $P_t$ denotes a path on $t$ vertices. 
Alekseev~\cite{Alekseev82}  observed already in 1982 that for \textsc{Maximum Independent Set}, the known NP-hardness reductions
prove hardness for $H$-free graphs unless every component of $H$ is a tree with at most three leaves. 
Since then, we know no more hardness results; in particular, no NP-hardness reduction for \textsc{Maximum Independent Set}
is known that would not produce long induced paths in the output graph. 
On the positive side, polynomial time algorithms are known for only a few small graphs $H$~\cite{CorneilLB81,LokshtanovVV14,GrzesikKPP19,Sbihi80,Minty80,Al04,LozinM08}.

Very recently, we have learned that there is a reason for that. 
First, two authors of this paper together with Chudnovsky and Thomass\'{e}~\cite{ChudnovskyPPT20}
showed that \MIS
(the weighted generalization of \textsc{Maximum Independent Set}) admits a quasi-polynomial-time approximation scheme (QPTAS)
in $H$-free graphs in all cases left open by Alekseev. 
Second, in a recent breakthrough result, Gartland and Lokshtanov~\cite{GL20}
showed a quasi-polynomial-time algorithm for \MIS in $P_t$-free graphs
with running time bound $n^{\Oh(\log^3 n)}$, providing the first decisive evidence against NP-hardness of the problem in these
graph classes.

In this work, inspired by the combinatorial insights of~\cite{GL20}, we present an arguably simpler
algorithm for \MIS in $P_t$-free graphs with an improved running time bound.

\begin{restatable}{theorem}{mainthmptfree}
\label{thm:main-ptfree}
For every fixed $t\in \mathbb{N}$, the \MIS problem can be solved in time $n^{\Oh(\log^2 n)}$ in $n$-vertex $P_t$-free graphs.
\end{restatable}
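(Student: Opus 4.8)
The plan is to follow the standard recursive branching scheme for \MIS, choosing the branching vertex via a combinatorial lemma that forces measurable progress, and to bound the recursion tree through a potential that counts induced paths.

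The first and, I expect, hardest step is to prove the combinatorial core announced in the abstract: there are constants $\alpha,\beta>0$ depending only on $t$ such that every connected $P_t$-free graph $G$ on at least two vertices has a vertex $w$ for which, for at least $\alpha\binom{|V(G)|}{2}$ pairs $\{u,v\}$, at least a $\beta$-fraction of the induced $u$-$v$ paths contain a vertex of $N[w]$. I would attack this by double counting over the set of incidences $(\{u,v\},P)$ with $P$ an induced $u$-$v$ path: since $G$ is $P_t$-free it has only $\Oh(n^{t-1})$ induced paths, so this incidence set is polynomially bounded, and one wants a single vertex whose closed neighborhood meets a constant fraction of the paths of a constant fraction of the pairs. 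A plain averaging only yields a vertex meeting a constant fraction of \emph{all} induced paths; the delicate point is to make the meeting spread across many pairs rather than concentrate on a few pairs carrying most of the paths. I would try to get this spread by invoking a structural feature of connected $P_t$-free graphs --- such as the existence of a vertex dominating a large, internally well-connected part --- and refining the averaging inside that part.

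Granting the lemma, the algorithm is the obvious one. On a disconnected graph, solve each connected component separately and add the answers (correct since every induced path, hence every restricted independent set, lives in one component). On a connected graph $G$ with good vertex $w$, branch over the $|N[w]|+1$ ways an optimum independent set $I$ can meet $N[w]$: for each $v\in N[w]$, the case $v\in I$ --- recurse on $G-N[v]$, add $\wei(v)$ --- and the case $N[w]\cap I=\emptyset$ --- recurse on $G-N[w]$. Every independent set satisfies exactly one of these, so returning the best outcome is correct.

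For the running time, use the potential $\Phi(G)=\sum_{\{u,v\}}\log(1+p_G(u,v))$, where $p_G(u,v)$ is the number of induced $u$-$v$ paths. On a connected graph it satisfies $\binom{|V(G)|}{2}\le\Phi(G)\le\binom{|V(G)|}{2}\cdot(t-1)\log|V(G)|$, it never increases under vertex deletion or passing to a component, and --- this is the point of the lemma --- it drops by $\Omega(m^2)$, equivalently by an $\Omega(1/\log m)$-fraction of its current value, whenever we delete $N[w]$ for the good vertex of an $m$-vertex connected graph, since each of the $\Omega(m^2)$ good pairs either disappears or loses a constant factor of its induced-path count. As $\Phi\le n^{\Oh(1)}$, along any root-to-leaf path of the recursion tree the potential is halved only $\Oh(\log n)$ times; and I would argue that the part of the tree generated between two successive halvings has only $n^{\Oh(\log n)}$ leaves, using that within it each ``delete $N[w]$'' step shrinks $\Phi$ by an $\Omega(1/\log n)$-fraction (so $\Oh(\log n)$ of them suffice to halve $\Phi$), that each branching node has at most $n+1$ children, that the other branches strictly shrink the vertex set, and a superadditivity estimate over components. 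Multiplying over the $\Oh(\log n)$ halvings gives $n^{\Oh(\log^2 n)}$ leaves, with polynomial work per node. The soft spot here is the branches that put a low-degree neighbor of $w$ into $I$: these make almost no ``path progress'', and I would absorb them by combining $\Phi$ with the vertex count (plus, if needed, standard low-degree reductions) so that every branch decreases at least one of the two measures nontrivially.
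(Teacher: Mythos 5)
Your overall architecture (bucket induced paths by endpoint pairs, find a vertex whose closed neighborhood hits a constant fraction of paths in a constant fraction of buckets, drive a logarithmic potential $\sum_{\{u,v\}}\log(1+|\Bc_{u,v}|)$, and decompose the recursion tree into $\Oh(\log n)$ layers each of quasi-polynomial size) matches the paper. But there are two genuine gaps. First, the combinatorial core --- which you yourself flag as the hardest step --- is not proved. Double counting incidences $(\{u,v\},P)$ cannot by itself produce the spread over pairs, and ``a vertex dominating a large, internally well-connected part'' is not the right structural handle. The mechanism the paper uses is the Gy\'arf\'as path argument (\cref{lem:separator}): a \emph{connected} set $X$ with $|X|\leq t$ such that every component of $G-N[X]$ has at most $n/2$ vertices. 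Then $N[X]$ meets \emph{all} paths of at least half of the $\binom{n}{2}$ buckets (by convexity, the pairs surviving inside the small components account for at most $\frac{1}{2}\binom{n}{2}$ buckets), and the pigeonhole principle over the at most $t$ vertices of $X$ yields a single $w$ that hits a $\frac{1}{t}$-fraction of paths in a $\frac{1}{2t}$-fraction of buckets. Without this (or an equivalent) argument, the existence of the good vertex $w$ is unestablished and the whole proof is conditional.

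Second, your branching rule breaks the running-time analysis. You branch into $|N[w]|+1$ cases, but only the single case $N[w]\cap I=\emptyset$ deletes $N[w]$ and is therefore guaranteed to make path progress; a branch ``$v\in I$'' for a neighbor $v$ of $w$ deletes $N[v]$, which need not contain $N[w]$ and need not decrease the path potential at all. So at every node you have up to $n$ children whose only guaranteed progress is removing $\Omega(1)$ vertices. A tree in which every node has $n$ such children and root-to-leaf paths can use $\Omega(n)$ of them in a row has $n^{\Omega(n)}$ leaves; combining $\Phi$ with the vertex count does not help, because with $n$-way branching an additive-constant drop in a measure of size $n$ still permits exponential blow-up. (Your claim that every independent set satisfies \emph{exactly} one case is also false --- it may contain several pairwise non-adjacent neighbors of $w$ --- though ``at least one'' suffices for correctness.) The fix is the paper's two-way branch: either delete $N[w]$ (progress in $\Phi$) or delete $w$ alone. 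Because the no-progress alternative is a \emph{single} designated child, the no-progress edges form vertex-disjoint chains of length $\Oh(n)$; contracting them yields a tree of depth $\Oh(\log n)$ (governed by $\Phi$) with $\Oh(n)$ children per node, which is what gives $n^{\Oh(\log n)}$ leaves per layer and $n^{\Oh(\log^2 n)}$ overall.
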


We also note that both the approach of Gartland and Lokshtanov~\cite{GL20} and our approach are quite robust and can be used to obtain quasi-polynomial-time algorithms for many other graph problems. 
A notable example is \Col, whose complexity in $P_t$-free graphs is a well-known open problem~\cite{BonomoCMSSZ18,DBLP:journals/corr/abs-2006-03009}.

\begin{restatable}{theorem}{threecol}
\label{thm:threecol}
For every fixed $t\in \mathbb{N}$, the \Col problem can be solved in time $n^{\Oh(\log^2 n)}$ in $n$-vertex $P_t$-free graphs.
\end{restatable}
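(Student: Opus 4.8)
The plan is to run the algorithm behind \Cref{thm:main-ptfree} essentially verbatim, replacing its ``is this vertex in the independent set'' bookkeeping by ``which colours are still admissible at this vertex'' bookkeeping. The first step is the reduction of \Col to \textsc{List $3$-Colouring}: an instance is a pair $(G,L)$ with $L\colon V(G)\to 2^{\{1,2,3\}}$, and we ask for a proper colouring $\phi$ with $\phi(v)\in L(v)$ for all $v$; plain \Col is the case $L\equiv\{1,2,3\}$, and deleting a vertex preserves $P_t$-freeness. Apply the following standard reductions exhaustively: if some $L(v)=\emptyset$, return \textsc{no}; if $|L(v)|=1$ with $L(v)=\{c\}$, delete $v$ and remove $c$ from the list of every neighbour of $v$ (this preserves both the answer and $P_t$-freeness, though it may disconnect $G$ and trigger further such deletions); if $G$ is disconnected, recurse on each connected component and return the conjunction of the answers; and---the only genuinely colouring-specific ingredient---if every list has size at most $2$, solve the instance in polynomial time via the classical reduction to \textsc{2-Sat}. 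What remains is a connected $P_t$-free graph in which every list has size $2$ or $3$ and at least one list has size $3$.

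On such an instance we branch exactly as in \Cref{thm:main-ptfree}. Invoke the combinatorial core of that proof---there is a vertex $w$ whose neighbourhood $N(w)$ meets, for an $\Omega(1)$-fraction of the pairs $\{u,v\}\in\binom{V(G)}{2}$, an $\Omega(1)$-fraction of the induced $u$--$v$ paths---applied to the subgraph induced by the currently undecided vertices, and branch on the colour of $w$: for each $c\in L(w)$ (at most three choices) set $L(w):=\{c\}$, propagate as above, and recurse; return \textsc{yes} iff some branch does. Correctness is immediate since every list-colouring assigns $w$ some colour of $L(w)$.

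For the running time one reuses the potential and the recursion-tree analysis of \Cref{thm:main-ptfree}. Take $\Phi$ to be the number of induced paths all of whose vertices are still undecided, so $\Phi\le n^{t-1}$ initially because $G$ is $P_t$-free; observe that fixing the colour of $w$ decides $w$ and shrinks the lists of all of $N(w)$, so by the choice of $w$ a $(1-\Omega(1))$-fraction of those paths in the affected component is destroyed, whence every branching step multiplies $\Phi$ by $1-\Omega(1/\log n)$. Since each step branches into at most three and the per-step work (propagation, splitting into components, locating $w$) is polynomial, the two-level accounting of the recursion tree from \Cref{thm:main-ptfree}---one logarithmic factor from the depth within a single component, one from the $\Oh(\log n)$ possible orders of magnitude of component sizes---yields the bound $n^{\Oh(\log^2 n)}$.

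The only step requiring genuine care is the one that makes the passage to lists work at all: $\Phi$ and the vertex $w$ must be chosen so that merely \emph{shrinking} the lists of $N(w)$---not deleting those vertices---already destroys a constant fraction of the relevant induced paths, which forces us to invoke the structural lemma inside each connected component of the undecided subgraph rather than in $G$ itself. This is precisely analogous to how the algorithm of \Cref{thm:main-ptfree} recurses on connected components, and is handled the same way; everything else, including the delicate analysis producing the exponent $\Oh(\log^2 n)$, transfers unchanged, with the \textsc{2-Sat} base case as the sole new component.
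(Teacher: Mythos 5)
Your overall architecture matches the paper's (reduce to \textsc{List 3-Coloring}, propagate singleton lists, recurse on components, branch at a heavy vertex, reuse the potential analysis), but there is a genuine gap exactly at the point you flag as ``the only step requiring genuine care'' --- and the fix you offer there (invoking the structural lemma per connected component) does not address it. Your potential $\Phi$ counts induced paths all of whose vertices are \emph{undecided}. When you fix the colour of $w$, the vertex $w$ is decided, but a neighbour $u$ of $w$ with $|L(u)|=3$ merely has its list shrunk to size $2$: it remains undecided, and every path through $u$ avoiding $w$ still counts towards $\Phi$. \Cref{lem:ptfree-heavy} only guarantees that $N[w]$ meets many paths; it says nothing about $w$ alone, so the branching step is not guaranteed to decrease $\Phi$ by any constant fraction of any bucket, and the depth bound collapses. (A secondary issue: a single global count of paths would not suffice even if this were fixed, since ``a constant fraction of paths in a constant fraction of buckets'' does not imply a constant fraction of all paths --- this is why \cref{thm:main-ptfree} uses the bucket-wise logarithmic potential $\mu$ rather than a raw count.)

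The paper closes this gap by changing the objects being counted: the buckets $\Bc_{u,v}$ contain \emph{coloured} induced paths (an induced path together with a proper list-respecting colouring), so that removing a colour $c$ from $L(u)$ destroys all coloured paths in which $u$ is coloured $c$, even though $u$ stays in the graph. Making this work requires two ingredients you do not have: a preprocessing step that enumerates all induced subgraphs on at most $t-1$ vertices and prunes any colour $c\in L(v)$ not extendable to a proper colouring of each such subgraph (property \ref{prop:consistency}), which guarantees every plain path contributes a coloured path with any prescribed admissible colour at any prescribed vertex; and a pigeonhole argument over the four possible lists of size $2$ or $3$ to extract a single pair $(w,c)$ with $c\in L(w)$ such that, for a constant fraction of buckets, a constant fraction of \emph{coloured} paths contain a vertex of $N[w]$ coloured $c$ (\cref{lem:heavy-coloring}). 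The branching is then two-way on ``colour $w$ with $c$'' versus ``forbid $c$ at $w$,'' not three-way over $L(w)$. Your \textsc{2-Sat} base case is a valid (and unneeded) addition, but without the coloured-path bookkeeping the running-time analysis does not go through.
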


Let us now discuss the main ideas behind \cref{thm:main-ptfree}. We consider the space of all induced paths in the graph.
In a $P_t$-free graph, this space is small, of size $\Oh(n^{t-1})$, and thus can be enumerated in polynomial time.
The main idea is to use the size of this space to guide a branching algorithm.

For \MIS, a natural branching step is the following: take a vertex $w$ and branch into two cases: 
either take $w$ to the constructed independent set and delete $N[w]$ from the graph (the \emph{successful branch})
or do not take $w$ and delete $w$ from the graph (the \emph{failure branch}). 
In the failure branch, we cannot hope for much progress, as we delete only one vertex. 
However, if the branching pivot $w$ is chosen carefully, we can hope to guarantee large progress in the successful branch,
leading to a good running time guarantee.

The crucial combinatorial property of $P_t$-free graphs, used also in~\cite{ChudnovskyPPT20,GL20}, is the following corollary of the Gy\'arf\'as path argument.
\begin{theorem}[Gy\'arf\'as~\cite{gyarfas}]\label{lem:separator}
Let $G$ be a $P_t$-free graph.
Then there is a connected set $X$ of size at most $t$ such that every connected component of $G - N[X]$ contains at most $|V(G)|/2$ vertices.
Furthermore, such a set can be found in polynomial time.
\end{theorem}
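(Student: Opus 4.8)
The plan is to apply the standard Gy\'arf\'as path argument. Set $n := |V(G)|$. We greedily grow an induced path $P = (p_1, p_2, \ldots)$ in $G$, at each step ``chasing'' a connected component of $G - N[P]$ that still has more than $n/2$ vertices, and we stop as soon as no such large component remains, returning $X = V(P)$.

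First I would reduce to the case that $G$ is connected. If $G$ has a connected component $C^{\ast}$ with more than $n/2$ vertices, then it is unique, and it suffices to run the argument inside $G[C^{\ast}]$: the resulting $X$ lies in $C^{\ast}$, so $N[X] \subseteq C^{\ast}$, and every component of $G - N[X]$ is either a component of $G[C^{\ast}] - N[X]$ (small by construction) or another component of $G$ (which already has at most $n/2$ vertices). If no component of $G$ exceeds $n/2$ vertices, any single vertex works as $X$; and if $n \leq 1$ the statement is trivial. So assume $G$ is connected with at least two vertices.

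Next, I maintain the following invariant after $k \geq 1$ steps: $P_k = (p_1, \ldots, p_k)$ is an induced path, the set $C_k$ is a connected component of $G - N[\{p_1, \ldots, p_{k-1}\}]$ with $|C_k| > n/2$, and $p_k$ has a neighbor in $C_k$. (Note the ``off-by-one'': we delete the closed neighborhood of the path \emph{with its last vertex removed}; this is exactly what keeps the chased component adjacent to the current endpoint.) For $k = 1$ this just says that $C_1 = V(G)$ and $p_1$ is a non-isolated vertex, so I pick any such $p_1$. For the inductive step, suppose the invariant holds for $k$ and look at the components of $G - N[\{p_1, \ldots, p_k\}]$; each of them is contained in a component of $G - N[\{p_1, \ldots, p_{k-1}\}]$, so if one of them, say $C_{k+1}$, has more than $n/2$ vertices then $C_{k+1} \subseteq C_k$. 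If there is no such large component, I stop and return $X = V(P_k)$. Otherwise the graph $H := G[C_k \cup \{p_k\}]$ is connected (as $p_k$ has a neighbor in the connected set $C_k$), and $C_{k+1}$ is a connected component of $H - N[p_k]$ — indeed $C_k$ is disjoint from $N[\{p_1, \ldots, p_{k-1}\}]$, so inside $V(H)$ the set $N[\{p_1, \ldots, p_k\}]$ restricts to exactly $N_H[p_k]$. Now I invoke the elementary fact that in a connected graph, every connected component of the graph minus the closed neighborhood of a vertex $v$ contains a vertex adjacent to $N(v)$; this is proved by taking a shortest path from $v$ to that component and examining its last vertex lying outside the component. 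Applying this to $v = p_k$ in $H$ produces a vertex $p_{k+1} \in C_k$ that is adjacent to $p_k$ and has a neighbor in $C_{k+1}$. Then $P_{k+1} = (p_1, \ldots, p_{k+1})$ is still induced, since $p_{k+1} \in C_k$ is non-adjacent to (and distinct from) $p_1, \ldots, p_{k-1}$ while being adjacent to $p_k$; and the invariant is restored with $C_{k+1}$.

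Finally I would wrap up. Each $P_k$ is an induced path on $k$ vertices, so since $G$ is $P_t$-free the process never constructs $P_t$; hence it terminates at some step $k \leq t-1$, and then $X = V(P_k)$ is a connected set of size at most $t$ for which every component of $G - N[X]$ has at most $n/2$ vertices. For the running time, each step only computes connected components and performs one breadth-first search to find $p_{k+1}$, and there are fewer than $t$ steps, so with $t$ fixed the whole procedure runs in polynomial time. The one point that requires genuine care — the only real content beyond routine bookkeeping — is the invariant design: deleting the closed neighborhood of the path \emph{minus its tip} (rather than of the whole path) is precisely what guarantees that a newly discovered large component always ``hangs off'' the current endpoint, so that the path can be extended towards it while staying induced.
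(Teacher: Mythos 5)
Your proof is correct and is precisely the classical Gy\'arf\'as path argument that the paper invokes by citation (it gives no proof of its own): greedily growing an induced path whose tip stays adjacent to the surviving large component of $G - N[\text{path minus tip}]$, so that $P_t$-freeness forces termination within $t-1$ steps. The invariant is set up carefully, the reduction to the connected case and the termination and running-time arguments are all in order, and you even obtain the slightly stronger bound $|X| \leq t-1$.
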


Let $G$ be a connected $P_t$-free graph. Take all the $\Oh(n^{t-1})$ induced paths in $G$ and partition them into \emph{buckets} $\{\,\Bc_{u,v}\ \colon\ \{u,v\} \in \binom{V(G)}{2}\,\}$ according to their endpoints: $\Bc_{u,v}$ comprises all induced paths with endpoints $u$ and $v$.
Take a set $X$ given by~\cref{lem:separator}. By the separation property, $N[X]$ intersects \emph{all} paths from at least
half of all the $\binom{|V(G)|}{2}$ buckets. Hence, as $|X| \leq t$, there exists 
a vertex $w \in X$ such that $N[w]$ intersects at least $\frac{1}{t}$ paths from at least $\frac{1}{2t}$ buckets. 

Such $w$ is an excellent branching pivot: after $\Oh(\log |V(G)|)$ successful branches, a constant fraction of buckets
become empty and this implies that $G$ got disconnected into connected components of multiplicatively smaller size. 
Furthermore, since we can enumerate all buckets in polynomial time, such a vertex $w$ can be identified in polynomial time.

\medskip
In \cref{sec:ptfree} we prove formally \cref{thm:main-ptfree}.
In \cref{sec:cor} we discuss the possible extenstions of the algorithm for \Col and related problems. These extensions follow by suitably adapting the strategy outlined above.

\section{The algorithm}\label{sec:ptfree}
In this section we prove \cref{thm:main-ptfree}, restated below.

\mainthmptfree*

Let $(G,\wei)$ be the instance of the \MIS problem, where $G$ is $P_t$-free and $\wei$ is the weight function.
Without loss of generality, assume $t \geq 5$.
To simplify the notation, we allow that the domain of $\wei$ to be a superset of $V(G)$.

Consider the set of all induced paths in $G$. We  partition them into \emph{buckets}.
For a pair of distinct vertices $u,v$, the bucket $\Bc_{u,v}$ contains all induced paths with one endvertex $u$ and the other $v$.
Since $G$ is $P_t$-free, the total size of all the buckets is 
$|V(G)|^{t-1}$.

Let $\eps > 0$ be a constant. We say that a vertex $w$ \emph{$\eps$-hits} a bucket $\Bc_{u,v}$ if $N[w]$ intersects at least $\eps \cdot |\Bc_{u,v}|$ paths in $\Bc_{u,v}$. 
A vertex $w$ is \emph{$\eps$-heavy} if it $\eps$-hits at least $\eps \cdot \binom{|V(G)|}{2}$ buckets (i.e., if $N[w]$ intersects at least an $\eps$-fraction of paths in at least an $\eps$-fraction of buckets).
The crucial idea of our algorithm is encapsulated in  the following claim, whose proof is inspired by the result of Gartland and Lokshtanov~\cite{GL20}.

\begin{lemma}\label{lem:ptfree-heavy}
A connected $P_t$-free graph has a $\frac{1}{2t}$-heavy vertex.
\end{lemma}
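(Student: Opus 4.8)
The plan is to combine the Gy\'arf\'as separator of \cref{lem:separator} with a two-step averaging argument. Let $n = |V(G)|$, and apply \cref{lem:separator} to obtain a connected set $X$ with $|X| \le t$ such that every connected component of $G - N[X]$ has at most $n/2$ vertices. Call a bucket $\Bc_{u,v}$ \emph{crossed} if $N[X]$ intersects every path in $\Bc_{u,v}$. I will establish two facts: (i) at least $\frac{1}{2}\binom{n}{2}$ of the buckets are crossed; and (ii) fact (i) forces some vertex of $X$ to be $\frac{1}{2t}$-heavy.

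For (i), note first that if $u \in N[X]$ or $v \in N[X]$, then every path in $\Bc_{u,v}$ already meets $N[X]$ at one of its endpoints, so $\Bc_{u,v}$ is crossed. Next, if $u$ and $v$ lie in \emph{distinct} components of $G - N[X]$, then every path from $u$ to $v$ in $G$ must pass through $N[X]$: indeed, $G$ has no edge between two distinct components of $G - N[X]$, since such an edge would have both endpoints outside $N[X]$ and would place the two components in one component of $G - N[X]$. Hence every non-crossed bucket $\Bc_{u,v}$ has $u$ and $v$ in a common component $C$ of $G - N[X]$, so the number of non-crossed buckets is at most $\sum_{C} \binom{|C|}{2}$, the sum ranging over components $C$ of $G - N[X]$. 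A routine estimate — writing $\binom{|C|}{2} = \frac{1}{2}(|C|^2 - |C|)$, bounding $|C|^2 \le \frac{n}{2}\,|C|$ using $|C| \le n/2$, and then using $\sum_C |C| \le n$ — gives $\sum_{C} \binom{|C|}{2} \le \frac{n(n-2)}{4} \le \frac{1}{2}\binom{n}{2}$, which proves (i).

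For (ii), fix a crossed bucket $\Bc_{u,v}$. Since $N[X] = \bigcup_{w \in X} N[w]$ meets all $|\Bc_{u,v}|$ paths of the bucket and $|X| \le t$, by pigeonhole there is some $w \in X$ for which $N[w]$ meets at least $\frac{1}{t}\,|\Bc_{u,v}|$ of them; in particular such a $w$ $\frac{1}{2t}$-hits $\Bc_{u,v}$. Charge each crossed bucket to one such $w$. As there are at least $\frac{1}{2}\binom{n}{2}$ crossed buckets and at most $t$ vertices in $X$, some $w^\star \in X$ is charged at least $\frac{1}{t}\cdot\frac{1}{2}\binom{n}{2} = \frac{1}{2t}\binom{n}{2}$ crossed buckets; $w^\star$ $\frac{1}{2t}$-hits each of them, hence $w^\star$ is $\frac{1}{2t}$-heavy.

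The argument is elementary, and I expect the only real subtlety to be in (i): one needs the topological observation that an induced path joining vertices in two distinct components of $G - N[X]$ necessarily visits $N[X]$, and one must keep the counting tight — since step (ii) already loses a factor of $t$ to the pigeonhole over $X$, step (i) must deliver a full half of the buckets, and the estimate in (i) must keep the lower-order term $\sum_C |C|$ (dropping it yields only the bound $\frac{n^2}{4}$, which is just barely too weak to conclude).
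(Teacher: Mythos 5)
Your proposal is correct and follows essentially the same route as the paper's proof: apply the Gy\'arf\'as separator of \cref{lem:separator}, bound the number of buckets containing a path disjoint from $N[X]$ by $\sum_C \binom{|C|}{2} \le \frac{1}{2}\binom{n}{2}$ over the components of $G-N[X]$, and then use the pigeonhole principle over the at most $t$ vertices of $X$ to extract a $\frac{1}{2t}$-heavy vertex. The only differences are cosmetic — you bound $\sum_C \binom{|C|}{2}$ by a direct estimate where the paper invokes convexity, and you spell out the charging argument and the fact that paths between distinct components of $G-N[X]$ must meet $N[X]$, which the paper leaves implicit.
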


\begin{proof}
Let $n$ be the number of vertices of the considered graph $G$.
Let $X$ be the set given by \cref{lem:separator} for $A = V(G)$.
We claim that $N[X]$ intersects all paths in at least $\frac{1}{2} \binom{n}{2}$ buckets.

Observe that $\Bc_{u,v}$ is non-empty if and only if $u$ and $v$ are in the same connected component.
So, as $G$ is connected, all the buckets are non-empty. As each connected component in $G - N[X]$ has at most $n/2$ vertices,
the number of buckets that contain at least one path disjoint with $N[X]$ is 
\[
\sum_{C: \text{ component of } G - N[X]} \binom{|V(C)|}{2} \leq 2 \cdot \frac{\frac{n}{2} \left( \frac{n}{2}-1\right)}{2} \leq \frac{1}{2}  \binom{n}{2}.
\]
Here, the first inequality follows from the fact that $|V(C)| \leq n/2$ and the convexity of the mapping $x\mapsto \frac{x(x-1)}{2}$.
It follows that $N[X]$ intersects all the paths in at least half of the buckets.

Recall that $|X| \leq t$. Thus, by the pigeonhole principle, there is $w \in X$ such that $N[w]$ intersects at least $\frac{1}{t}$-fraction of paths in at least $\frac{1}{2t} \binom{n}{2}$ buckets.
\end{proof}

We now proceed to describing the algorithm. For simplicity, the algorithm returns the maximum weight of a solution, but it is straightforward to adapt it so that a solution witnessing this value is constructed on the way. 
The key step is \emph{branching on heavy vertices}. For a vertex~$w$, we will separately consider two instances: $G-w$ (indicating that $w$ is not chosen to the solution, we call this the \emph{failure branch}),
and $G - N[w]$ (indicating that $w$ is chosen to the solution, this branch is called \emph{successful}).
Clearly, the optimum weight of a solution is the maximum of the return value of the first call and the return value of the second call, plus $\wei(w)$.

Now, the algorithm is very simple. If the vertex set is empty, then we return 0.
If $G$ has one vertex, then we return its weight.
If $G$ is disconnected, we call the algorithm recursively for every connected component of $G$.
Otherwise, we enumerate all induced paths in $G$ and partition them into buckets, find a $\frac{1}{2t}$-heavy vertex, and we branch on it.
The pseudo-code is given in \Cref{alg:ptfree}.

\medskip

\begin{algorithm}[H]
\caption {FindMIS \label{alg:ptfree}}
\SetKwFunction{algo}{FindMIS}
\KwIn{$P_t$-free graph $G$, weight function $\wei$}
\If {$|V(G)| \leq 1$} {\Return the total weight of $V(G)$}
\If {$G$ is disconnected \label{line:components}} 
{
\Return{$\sum_{C: \textrm{component of } G} \algo(C,\wei)$ \label{line:callcomponents}} 
}
Initialize buckets $\Bc_{u,v}$ for all $u,v \in V(G)$\\
$w \gets \text{ a } \frac{1}{2t}$-heavy vertex in $G$ \label{line:heavy}\\
\Return {$\max \{ \algo(G-w,\wei), \algo(G-N[w],\wei) + \wei(w) \}$ \label{line:branching} }
\end{algorithm}
\medskip

\cref{lem:ptfree-heavy} asserts that in line~\ref{line:heavy} we can always find a $\frac{1}{2t}$-heavy vertex.
Note that in each recursive call the number of vertices of the instance graph decreases. Thus it is clear that the algorithm terminates and returns the correct value.
Furthermore, the local computation in each node of the recursion tree can be performed in time $|V(G)|^{\Oh(t)}$.
It remains to show that the number of nodes in the recursion tree is bounded by $|V(G)|^{\Oh(\log^2 |V(G)|)}$.

To this end, consider the recursion tree $T$ of the algorithm applied on a graph $G$. For a call on a graph $H$ (which is a subgraph $G$), the \emph{local subtree} of the call
consists of all descendant calls that treat a graph with at least $0.99|V(H)|$ vertices. 
We greedily find a partition $\mathcal{P}$ of $T$ into local subtrees as follows: start with $\mathcal{P} = \emptyset$ and, as long as there exists a call in $T$
that is in none of the local subtrees in $\mathcal{P}$, take such a call closest to the root and add its local subtree to $\mathcal{P}$.

Clearly, a root-to-leaf path in the recursion tree intersects $\Oh(\log |V(G)|)$ local subtrees of $\mathcal{P}$. Thus, it suffices to prove that
any local subtree, say for a call on a graph $H$, contains at most $|V(H)|^{\Oh(\log |V(H)|)}\leq |V(G)|^{\Oh(\log |V(G)|)}$ leaves.

Let $S$ be the local subtree rooted at a call on a graph $H$. Mark the following edges of $S$:
\begin{enumerate}
\item For every call in $S$ on a disconnected graph, say on $H'$,
observe that there is at most one child call of this call that also belongs to $S$. Indeed, this call must be on a connected component $H''$ of $H'$ satisfying $|V(H'')| \geq 0.99|V(H')|$, and there is at most one such component. If there exists such a unique child call that belongs to $S$, mark the edge to it. 
\item For every call in $S$ on a connected graph, mark the edge to the call in the failure branch (provided it belongs to $S$).
\end{enumerate}
Thus, every call in $S$ has at most one marked edge to a child. Hence, the marked edges form a family of vertex-disjoint upwards paths in $S$. 
Let $S'$ be the tree obtained from $S$ by contracting all the marked edges; the parent-child relation is naturally inherited from $S$. Then, every node in $S'$ has $\Oh(|V(H)|)= \Oh(|V(G)|)$ children and every edge of $S'$ corresponds to a successful branch in some call in $S$.
It suffices to show that $S'$ has depth $\Oh(\log |V(H)|)= \Oh(\log |V(G)|)$. 

To this end, we introduce the potential of a call in $S$, say on a graph $H'$:
\[
\mu(H') \coloneqq - \sum_{\{u,v\} \in \binom{V(H')}{2}} \log_{(1-1/2t)} (1+|\Bc_{u,v}|).
\]
At the initial call on $H$, we have $\mu(H) = \Oh(|V(H)|^2 \log |V(H)|)$, because the size of each
bucket is 
at most $|V(H)|^{t-1}$.
Since in a successful branch we remove the closed neighborhood of a $\frac{1}{2t}$-heavy vertex,
a successful branch at a call in $S$ on a graph $H'$ results in decreasing   
 the potential $\mu$ by at least
\[
\frac{1}{2t} \binom{|V(H')|}{2} \geq \frac{1}{2t} \binom{\lceil 0.99|V(H)| \rceil}{2} \geq \frac{0.9}{2t} \binom{|V(H)|}{2}.
\]
Since $\mu$ is nonnegative, it follows that the depth of $S'$ is bounded by $\Oh(\log |V(H)|)$, as desired.

\section{Extensions of the algorithm}\label{sec:cor}
In this section we discuss possible extensions of the algorithm from \cref{sec:ptfree}.
The crucial step of many subexponential-time algorithms for $P_t$-free graphs and $C_{>t}$-free graphs is branching on a high-degree vertex~\cite{BacsoLMPTL19,GORSSS18,ORz20,CPPT}.
We observe that some of these algorithms can be turned into quasi-polynomial ones with the new approach.

\subsection{Partitioning vertices: \Col}

Let us consider the \Col problem, whose complexity in $P_t$-free graphs is a well-known open problem~\cite{BonomoCMSSZ18,DBLP:journals/corr/abs-2006-03009}. We aim to show \cref{thm:threecol}, restated below.

\threecol*

The algorithm and its analysis are very similar to the proof of \cref{thm:main-ptfree}, so we will only point out the differences.
The adaptation is inspired by the known subexponential-time algorithms for \Col~\cite{BRz19,GORSSS18}.
We remark that a quasi-polynomial-time algorithm for \Col in $P_t$-free graphs with running time $n^{\Oh(\log^3 n)}$ can be also derived from the work of Gartland and Lokshtanov~\cite{GL20}, by an analogous adaptation of their approach.

Actually, we will solve the more general \textsc{List 3-Coloring}, where every vertex $v$ has a \emph{list} $L(v) \subseteq \{1,2,3\}$, and we ask for a coloring respecting lists $L$ in the sense that the color of each vertex belongs to its list.

As the first step, we preprocess the instance as follows.
If there exists a vertex with an empty list, then there is no way to properly color the graph with lists $L$ and thus we can immediately terminate the current call, as we deal with a no-instance.
Further, if there is a vertex $v$ with a one-element list, say $L(v) = \{c\}$, then we can obtain an equivalent instance by removing $c$ from the lists of neighbors of $v$ and deleting $v$ from the graph. This corresponds to coloring $v$ with the color $c$. 
Finally, we enumerate all sets $S \subseteq V(G)$ of size at most $t-1$, and all their proper colorings, respecting lists $L$.
If for some set $S$, the graph $G[S]$ cannot be properly colored with lists $L$, then we terminate the call and report a no-instance.
Moreover, if for some $S$, some $v \in S$, and some $c \in L(v)$, the vertex $v$ is not colored $c$ in any proper coloring of $G[S]$, respecting lists $L$, then we can safely remove $c$ from $L(v)$.
We perform these steps exhaustively; this can clearly be done in polynomial time.
Thus, after the preprocessing, the instance satisfies the following properties:
\begin{enumerate}[(P1)]
\item Each list has two or three elements. \label{prop:lists}
\item For each $v \in V(G)$, each $c \in L(v)$, and each $S \subseteq V(G)$, such that $v \in S$ and $|S| \leq t-1$, there is a proper coloring of $G[S]$, respecting lists $L$, in which the color of $v$ is $c$. \label{prop:consistency}
\end{enumerate}

Similarly to \Cref{alg:ptfree}, our algorithm has two key steps. 
If the graph is disconnected, we call the algorithm for each connected component independently, and report a yes-instance if all these calls report yes-instances.

Otherwise, if the graph is connected, then we will branch on a vertex. Again, this vertex will be carefully chosen using buckets.
This time the objects in a bucket $\Bc_{u,v}$ will be \emph{colored} induced $u-v$ paths, i.e., for each induced $u-v$ path we additionally enumerate all its proper colorings, respecting lists $L$.
Observe that by property \ref{prop:consistency} we know that every induced $u-v$ path $P$ appears at least once in $\Bc_{u,v}$ as a colored path. Even stronger, if $w$ is a vertex of $P$ and $c \in L(w)$, then $P$ appears in $\Bc_{u,v}$ as a colored path, where $w$ is colored $c$.
Observe that thus, we still have the property that $\Bc_{u,v}$ is non-empty if and only if $u$ and $v$ are in the same connected component of $G$.
Note that the total size of all buckets is at most $|V(G)|^{t-1} \cdot 3^{t-1}=(3|V(G)|)^{t-1}$, and we can compute them in time $|V(G)|^{\Oh(t)}$.

The crucial observation is that the analogue of \cref{lem:ptfree-heavy} holds for these buckets too.

\begin{lemma}\label{lem:heavy-coloring}
Suppose $G$ is a connected $P_t$-free graph and $L\colon V(G)\to 2^{\{1,2,3\}}$ is a list assignment that satisfies properties~\ref{prop:lists} and \ref{prop:consistency}. Then there is a vertex $w$ and a color $c \in L(w)$
such that for at least $\frac{1}{8t} \cdot \binom{|V(G)|}{2}$ pairs $\{u,v\} \in \binom{V(G)}{2}$,
at least $\frac{1}{8t \cdot 3^{t-1}} \cdot |\Bc_{u,v}|$ colored paths in $\Bc_{u,v}$ contain a vertex colored $c$ that belongs to~$N[w]$.
\end{lemma}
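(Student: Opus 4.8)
The plan is to follow the same template as the proof of \cref{lem:ptfree-heavy}, but now track, alongside each induced $u$--$v$ path, the colorings we have attached to it. First I would invoke \cref{lem:separator} with $A=V(G)$ to obtain a connected set $X$ of size at most $t$ such that every component of $G-N[X]$ has at most $n/2$ vertices, where $n=|V(G)|$. As argued in \cref{lem:ptfree-heavy}, the set $N[X]$ intersects \emph{every} induced path in at least $\frac12\binom n2$ of the buckets; call these buckets \emph{good}. The point is that for a good bucket $\Bc_{u,v}$, every underlying induced $u$--$v$ path $P$ has some vertex $x_P \in V(P)\cap N[X]$. Now I would move from ``intersects $N[X]$'' to ``intersects $N[w]$ for a fixed $w\in X$'': by pigeonhole on the at most $t$ vertices of $X$, there is $w\in X$ such that for at least $\frac1t$ of the good buckets -- hence at least $\frac{1}{2t}\binom n2$ buckets overall -- a $\frac1t$-fraction of the underlying induced paths $P$ in that bucket satisfy $V(P)\cap N[w]\neq\emptyset$. (Here one first pigeonholes the buckets by which $w\in X$ hits the most paths, then within each such bucket at least a $\frac1t$ fraction of paths are hit by that $w$, since each hit path is hit by at least one of the $t$ vertices.)

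The next step is to pass from underlying paths to \emph{colored} paths and to pin down a single color $c\in L(w)$. Fix a bucket $\Bc_{u,v}$ that is ``doubly good'' for $w$ in the sense above, and let $\mathcal{Q}$ be the set of underlying induced $u$--$v$ paths $P$ with $V(P)\cap N[w]\neq\emptyset$, so $|\mathcal{Q}|\geq\frac1t|\mathcal{B}_{u,v}^{\mathrm{und}}|$ where $\mathcal{B}_{u,v}^{\mathrm{und}}$ is the set of underlying paths. For each $P\in\mathcal{Q}$, pick a vertex $x_P\in V(P)\cap N[w]$. By property~\ref{prop:consistency} (applied to $S=V(P)$, which has size at most $t-1$ since $G$ is $P_t$-free) and, for the case $x_P=w$, directly, the vertex $x_P$ can be assigned any color in $L(x_P)$ in some proper list-coloring of $G[V(P)]$; in particular $P$ appears in $\Bc_{u,v}$ as a colored path in which $x_P$ receives any prescribed color from its list. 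Moreover, since $|L(x_P)|\leq 3$, at least a $\tfrac13$ fraction of \emph{all} colored copies of $P$ in $\Bc_{u,v}$ -- wait, I need to be careful here: what I actually want is, for each color, a lower bound on the number of colored copies of $P$ in which \emph{some} vertex of $V(P)\cap N[w]$ gets that color. The clean way: there are at most $3^{t-1}$ colored copies of each underlying path, and at least one colored copy of $P$ in which $w$'s neighbor (or $w$ itself) $x_P$ gets color $c$, for each $c\in L(x_P)$. So for the color $c=c_P$ that $x_P$ gets in the lexicographically-first such coloring, at least $\tfrac{1}{3^{t-1}}$ of the colored copies of $P$ ``witness'' $c_P$ at $x_P$ -- actually even just ``at least one colored copy'' suffices if I instead count underlying paths. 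Let me restructure: I will just say that for each $P\in\mathcal{Q}$ and each $c\in L(w)$, if $x_P\in N(w)$ we cannot force $w$'s color, so I should fix the color at $x_P$, not at $w$. The statement asks for ``a vertex colored $c$ that belongs to $N[w]$'', i.e. some vertex of the colored path that lies in $N[w]$ and has color $c$ -- so coloring $x_P$ itself with $c$ works.

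So the final counting step is: assign to each $P\in\mathcal{Q}$ a color $f(P)\in L(x_P)\subseteq\{1,2,3\}$; by pigeonhole some color $c$ is assigned to at least $\frac13|\mathcal{Q}|\geq\frac{1}{3t}|\mathcal{B}_{u,v}^{\mathrm{und}}|$ of the paths in $\mathcal{Q}$. For each such $P$, there is at least one colored copy in $\Bc_{u,v}$ in which $x_P$ (a vertex of $N[w]$) has color $c$; these copies are distinct for distinct $P$. Since $|\mathcal{B}_{u,v}^{\mathrm{und}}|\geq \frac{1}{3^{t-1}}|\Bc_{u,v}|$ (each underlying path has at most $3^{t-1}$ colored copies), we get at least $\frac{1}{3t\cdot 3^{t-1}}|\Bc_{u,v}|$ colored paths in $\Bc_{u,v}$ containing a vertex of $N[w]$ colored $c$. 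Finally, the color $c$ must be chosen uniformly: pigeonhole once more over $c\in L(w)$ and over the (at least $\frac{1}{2t}\binom n2$) doubly-good buckets, losing another factor $3$, to find a single $w$ and single $c\in L(w)$ good for at least $\frac{1}{3}\cdot\frac{1}{2t}\binom n2=\frac{1}{6t}\binom n2\geq\frac{1}{8t}\binom n2$ buckets, with the per-bucket fraction $\frac{1}{3t\cdot 3^{t-1}}\geq\frac{1}{8t\cdot 3^{t-1}}$. The constants $\frac1{8t}$ and $\frac{1}{8t\cdot 3^{t-1}}$ in the statement are generous enough to absorb all these pigeonhole losses.

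\textbf{Main obstacle.} The one genuinely delicate point is the bookkeeping of which vertex and which color we fix: we must fix the color \emph{at the vertex $x_P\in V(P)\cap N[w]$} (possibly $x_P=w$ itself, possibly a neighbor), not at $w$, because when $x_P$ is merely a neighbor of $w$ we have no control over $w$'s color from within $G[V(P)]$ -- and indeed the lemma only claims a vertex \emph{in $N[w]$} is colored $c$. Property~\ref{prop:consistency} is exactly what licenses fixing an arbitrary color at $x_P$ inside $G[V(P)]$ (note $|V(P)|\leq t-1$). Everything else is a sequence of pigeonhole applications, and one just has to check that the five successive losses -- a factor $t$ (which $w\in X$), a factor $t$ (within-bucket, paths hit by that $w$), a factor $3$ (which color $c$, within a bucket), a factor $3^{t-1}$ (colored vs.\ underlying copies), and a factor $3$ (making $c$ uniform across buckets) -- leave us inside the claimed bounds, which they comfortably do.
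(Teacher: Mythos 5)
Your overall architecture matches the paper's: reuse \cref{lem:ptfree-heavy} to obtain a vertex $w$ that hits a $\frac{1}{t}$-fraction of the uncolored paths in a $\frac{1}{2t}$-fraction of the buckets, fix for each hit path $P$ a witness $x_P\in V(P)\cap N[w]$, invoke property~\ref{prop:consistency} to realize a prescribed color at $x_P$ in some colored copy of $P$, and finish with pigeonhole steps whose losses fit under the stated constants. You also correctly identify the first subtlety, namely that the color must be fixed at $x_P$ rather than at $w$. However, there is a genuine gap at the one remaining step that makes the lemma true as stated: nothing in your argument guarantees that the color $c$ you end up with lies in $L(w)$. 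You assign to each path a color $f(P)\in L(x_P)$ and then pigeonhole, within a bucket and across buckets, over a universe that is really $\{1,2,3\}$; the popular color this produces is only guaranteed to occur in many lists $L(x_P)$. Writing the final step as ``pigeonhole over $c\in L(w)$'' does not repair this, because the per-bucket popular colors are not drawn from $L(w)$. Concretely, if $L(w)=\{1,2\}$ while every witness has $L(x_P)=\{2,3\}$, your freedom to set $f(P)=3$ can leave you with $c=3\notin L(w)$, and then the algorithm has no branch ``color $w$ with $c$'' to perform.

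The missing observation --- which is exactly the paper's key point for this lemma --- is that by property~\ref{prop:lists} both $L(w)$ and $L(x_P)$ have at least two of the three colors, hence $L(x_P)\cap L(w)\neq\emptyset$, so you may (and must) choose $f(P)\in L(x_P)\cap L(w)$; after that, every color that can become popular is automatically in $L(w)$. (The paper implements the same idea by first pigeonholing the witnesses over the four possible lists $R$ and only then picking $c\in R\cap L(w)$.) With that one-line correction your proof goes through, and your constants are even marginally better than the paper's; without it, the requirement $c\in L(w)$ in the statement is unproven.
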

\begin{proof}
For all $\{u,v \} \in \binom{V(G)}{2}$, let $\Bc'_{u,v}$ be the bucket from \Cref{alg:ptfree}, i.e., the collection of all induced $u-v$ paths.
Let $w$ be a $\frac{1}{2t}$-heavy vertex given by \cref{lem:ptfree-heavy} for the buckets $\{\,\Bc'_{u,v}\ \colon\ \{u,v\} \in \binom{V(G)}{2}\,\}$.

Recall that by property \ref{prop:lists}, each vertex in $G$ has one of four possible lists.
Thus, by the pigeonhole principle, there exists a list $R\subseteq \{1,2,3\}$ and a subset $Q \subseteq \binom{V(G)}{2}$ of size at least $\frac{1}{8t} \binom{|V(G)|}{2}$
such that for all $\{u,v\} \in Q$, there exists $\widetilde{\Bc}'_{u,v} \subseteq \Bc'_{u,v}$ of size at least $\frac{1}{8t} \cdot |\Bc'_{u,v}|$ with the property that each path in $\widetilde{\Bc}'_{u,v}$ contains a vertex that belongs to $N[w]$ and whose list is $R$.

By property \ref{prop:lists} we know that $|L(w)|\geq 2$ and $|R|\geq 2$, so there is a color $c \in R \cap L(w)$.
Furthermore, by property \ref{prop:consistency}, each path in $\widetilde{\Bc}'_{u,v}$ gives rise to at least one colored path in $\Bc_{u,v}$ which contains a vertex colored~$c$. Moreover, these colored paths are pairwise distinct.

Now the claim follows from the observation that $|\Bc_{u,v}| \leq 3^{t-1} \cdot |\Bc'_{u,v}|$, so for each $\{u,v\} \in Q$, we have selected at least $|\widetilde{\Bc}'_{u,v}| \geq \frac{1}{8t} |\Bc'_{u,v}| \geq \frac{1}{8t \cdot 3^{t-1}} |\Bc_{u,v}|$ paths in $\Bc_{u,v}$.
\end{proof}

\cref{lem:heavy-coloring} gives us an efficient way to perform branching in case of dealing with a connected graph.
Let $w$ and $c$ be as in the statement of the lemma, note that they can be found in polynomial time, as the total size of all buckets is $|V(G)|^{\Oh(t)}$.
We branch on coloring $w$ with color $c$.
In the first branch we remove $c$ from $L(w)$ (this corresponds to deciding not to color $w$ with $c$).
In the second branch, we assign the color $c$ to $w$, i.e., we remove all other colors from $L(w)$.
Note that the preprocessing step at the beginning of the subsequent recursive call will remove the vertex $w$ from the graph, and the color $c$ from the lists of all the neighbors of $w$.
We report a yes-instance if at least one of the two recursive calls reports a yes-instance.

The algorithm clearly terminates, as in each call we reduce the total size of all lists, and returns the correct answer.
Recall that the size of each bucket is polynomial in $|V(G)|$.
\cref{lem:heavy-coloring} implies that in the preprocessing phase in the branch where we color $w$ with the color $c$, we remove a constant fraction of colored paths in a constant fraction of buckets.
Note that a path may be removed in one of two ways: either it contains $w$, so it will be removed when we delete $w$, or it contains a neighbor of $w$ colored $c$, so it will be removed when we delete $c$ from the lists of neighbors of $w$.

Now the analysis of the running time of the algorithm is essentially the same as that in the proof of \cref{thm:main-ptfree}; we leave the details to the reader. This completes the proof of \cref{thm:threecol} in the case for $P_t$-free graphs.

\bigskip

Let us point out that the above algorithm works also in the weighted setting, i.e., with each pair $(v,c)$, where $v \in V(G)$ and $c \in \{1,2,3\}$, we are given a cost $\wei(v,c)$ of coloring $v$ with $c$, and we ask for a proper coloring minimizing the total cost. A natural special case of this problem is \textsc{Independent Odd Cycle Transversal}, where we ask for a minimum-sized independent set which intersects all odd cycles. The complexity of this problem in $P_t$-free graphs is another open problem in the area~\cite{BonamyDFJP19}.

Furthermore, the algorithm from this section can be extended to some family of (weighted) graph homomorphism problems, which generalize both the \MIS problem and the (\textsc{List}) \Col problem, similarly to the work of Groenland et al.~\cite{GORSSS18}, see also~\cite{ORz20}. We skip the details, as they do not bring any new insight.

\subsection{Packing fixed patterns: \MIM}

Another way to generalize \MIS is to pack induced, non-adjacent copies of some fixed pattern in the host graph $G$. A natural example of such a problem is \MIM, where the pattern is $K_2$. This problem can be equivalently formulated as the \MIS problem on $L^2(G)$, i.e., the square of the line graph of $G$. The vertex set of $L^2(G)$ is $E(G)$, and the edges $e_1,e_2 \in E(G)$ are adjacent in $L^2(G)$ if and only if they do not form an induced matching in $G$, i.e., they either intersect, or some vertex of $e_1$ is adjacent to some vertex of $e_2$.
The following structural property of $L^2(G)$ was shown by Kobler and Rotics~\cite{KR}.

\begin{lemma}[Kobler and Rotics~\cite{KR}] \label{lem:linegraph}
For every $t \geq 4$, if $G$ is $P_t$-free, then $L^2(G)$ is $P_t$-free.
\end{lemma}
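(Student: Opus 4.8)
The goal is to show that if $G$ is $P_t$-free then $L^2(G)$ contains no induced path on $t$ vertices. The plan is to argue by contraposition: assume $L^2(G)$ contains an induced path $Q = f_1 f_2 \cdots f_t$ (where each $f_i \in E(G)$), and construct from it an induced $P_t$ in $G$. The natural candidate is obtained by walking along the edges $f_1, f_2, \dots, f_t$ in order and stitching them together into a path in $G$; the point is that the adjacency/non-adjacency pattern in $L^2(G)$ exactly controls how these edges sit relative to one another in $G$.

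First I would unpack the definition of $L^2(G)$ on consecutive and non-consecutive edges of $Q$. For consecutive edges $f_i, f_{i+1}$: they are adjacent in $L^2(G)$, so they either share an endpoint, or there is an edge of $G$ between an endpoint of $f_i$ and an endpoint of $f_{i+1}$. For non-consecutive edges $f_i, f_j$ with $|i-j|\ge 2$: they are non-adjacent in $L^2(G)$, meaning they form an induced matching in $G$ — they are vertex-disjoint and there is no edge of $G$ joining them. The second condition is the strong one: it says that the edges $f_1, f_3, f_5, \dots$ (and indeed any two edges far apart in $Q$) are pairwise far apart in $G$ as well, which is what will let us extract a genuine induced path.

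The main work is choosing, for each $f_i$, which of its two endpoints to use, so that the chosen vertices form an induced path. Here I would do a short case analysis driven by whether consecutive edges overlap. If $f_i$ and $f_{i+1}$ share a vertex, that shared vertex is a natural "joint". If they are disjoint but linked by an edge $xy$ of $G$ with $x \in f_i$, $y \in f_{i+1}$, one can try to route through $x$ and $y$. The subtlety is that a single edge $f_i = \{a_i, b_i\}$ has to serve as a link to both $f_{i-1}$ and $f_{i+1}$ simultaneously, and naively one might need both $a_i$ and $b_i$, inflating the path length; one must check that the resulting vertex sequence still has exactly $t$ (or at least $t$) vertices and, crucially, is \emph{induced}. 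The inducedness is where the non-adjacency of far-apart edges is used: any chord of the candidate path would connect two vertices coming from edges $f_i, f_j$ with $|i-j| \ge 2$, contradicting that $f_i, f_j$ form an induced matching — except for chords between vertices of the same $f_i$ or of consecutive $f_i, f_{i+1}$, which have to be ruled out by the way the representatives were selected. I expect the bookkeeping in this selection step to be the main obstacle: one needs a clean, uniform rule for picking representatives that handles the "shared endpoint" and "connected by an edge" cases together, keeps the length under control, and makes the no-chord verification essentially automatic. A convenient way to organize this is to first build a (not necessarily induced) walk in $G$ realizing the sequence of edges, observe it has length $\Theta(t)$, and then note that since far-apart edges form induced matchings the walk has no "long-range" chords, so a shortest sub-path of $G$ between the two ends still visits representatives of all $f_i$ and has at least $t$ vertices, yielding the $P_t$. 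Finally I would double-check the hypothesis $t \ge 4$ is what makes "$|i-j|\ge 2$ forces an induced matching" give enough room (for $t \le 3$ there simply are not enough far-apart pairs), and conclude that $G$ is not $P_t$-free, completing the contrapositive.
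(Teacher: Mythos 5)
The paper does not actually prove this statement: it is imported as a black box from Kobler and Rotics~\cite{KR}, so there is no in-paper argument to compare against. Your plan, read as a self-contained proof sketch, is sound and its final variant (build the walk, then extract a shortest path) is the right way to organize it. The key facts you isolate are exactly the ones needed: consecutive edges $f_i,f_{i+1}$ of the induced path $Q$ in $L^2(G)$ either share an endpoint or are joined by a $G$-edge, so $S=\bigcup_i V(f_i)$ induces a connected subgraph of $G$; and for $|i-j|\geq 2$ the edges $f_i,f_j$ form an induced matching, so $G[S]$ has no ``long-range'' edges. A shortest path in $G[S]$ between suitable endpoints is automatically induced in $G$ (this also disposes of the within-$f_i$ and consecutive-$f_i,f_{i+1}$ chords you worry about in the representative-selection variant --- no case analysis is needed once you commit to the shortest-path route).

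The one place where your sketch is not yet airtight is the lower bound of $t$ vertices, and the fix is a specific choice of endpoints plus a one-line potential argument rather than ``the shortest path visits representatives of all $f_i$.'' Assign to each $x\in S$ the set $I(x)$ of indices $i$ with $x\in V(f_i)$; since edges at distance $\geq 2$ in $Q$ are vertex-disjoint, $I(x)$ is a singleton or a pair of consecutive integers, and since they also span no $G$-edge, every edge of $G[S]$ joins $x,y$ with $|\min I(x)-\min I(y)|\leq 1$. Now take $u$ to be the endpoint of $f_1$ \emph{not} in $f_2$ and $v$ the endpoint of $f_t$ \emph{not} in $f_{t-1}$ (these exist because $f_1\neq f_2$ and $f_{t-1}\neq f_t$); then $\min I(u)=1$ and $\min I(v)=t$, so $\dist_{G[S]}(u,v)\geq t-1$ and the shortest path has at least $t$ vertices. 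If instead you let the path end at a vertex shared by $f_{t-1}$ and $f_t$, the same argument only yields $t-1$ vertices, so this endpoint choice is genuinely necessary, not just bookkeeping. With that addition your contrapositive goes through and the lemma follows.
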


As the number of vertices of $L^2(G)$ is at most $|V(G)|^2$, \cref{lem:linegraph} combined with \cref{thm:main-ptfree} immediately yields the following.

\begin{corollary}
For every fixed $t\in \mathbb{N}$, the \MIM problem can be solved in time $n^{\Oh(\log^2 n)}$ in $n$-vertex $P_t$-free graphs.
\end{corollary}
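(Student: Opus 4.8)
The plan is to reduce \MIM in $G$ to \MIS in the auxiliary graph $L^2(G)$, the square of the line graph, and then simply invoke \cref{thm:main-ptfree}. As recalled above, the vertex set of $L^2(G)$ is $E(G)$, and two edges $e_1,e_2$ of $G$ are non-adjacent in $L^2(G)$ exactly when $\{e_1,e_2\}$ forms an induced matching of $G$. Consequently, an induced matching of $G$ is the same object as an independent set of $L^2(G)$, and transferring the weight of each edge of $G$ to the corresponding vertex of $L^2(G)$ turns a \MIM instance $(G,\wei)$ into an equivalent \MIS instance $(L^2(G),\wei)$, with solutions translating back and forth in the obvious way.

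First I would dispose of the small cases: for $t \leq 3$ a $P_t$-free graph is a disjoint union of cliques (or simpler), so \MIM is solvable in polynomial time directly; hence assume $t \geq 4$. Next, construct $L^2(G)$ explicitly, which takes time polynomial in $|V(G)|$, since $|V(L^2(G))| = |E(G)| \leq \binom{|V(G)|}{2}$ and adjacency in $L^2(G)$ can be read off $G$. Now \cref{lem:linegraph} guarantees that $L^2(G)$ is itself $P_t$-free, so \cref{thm:main-ptfree} applies to $(L^2(G),\wei)$ and returns the sought optimum.

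Finally, the running time: set $m \coloneqq |V(L^2(G))| \leq |V(G)|^2 = n^2$. The algorithm of \cref{thm:main-ptfree} runs in time $m^{\Oh(\log^2 m)}$, which, using $m \leq n^2$ and $\log m \leq 2\log n$, is bounded by $(n^2)^{\Oh(\log^2 n)} = n^{\Oh(\log^2 n)}$. I do not anticipate any genuine obstacle here: both ingredients — the folklore identification of induced matchings of $G$ with independent sets of $L^2(G)$, and the $P_t$-freeness of $L^2(G)$ supplied by \cref{lem:linegraph} — are already available, and the only thing left is the trivial bookkeeping observation that squaring the vertex count merely doubles $\log n$ inside the exponent, hence affects only the constant hidden in $\Oh(\cdot)$.
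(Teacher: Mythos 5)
Your proposal is correct and follows exactly the paper's route: reduce \MIM on $G$ to \MIS on $L^2(G)$, invoke \cref{lem:linegraph} for $P_t$-freeness of $L^2(G)$, and apply \cref{thm:main-ptfree}, noting that squaring the vertex count only changes the constant in the exponent. Your extra care with the $t\leq 3$ cases (where \cref{lem:linegraph} does not apply) is a small but welcome addition that the paper leaves implicit.
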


Let us point out that we could also obtain an algorithm for \MIM by a direct modification of \Cref{alg:ptfree} in a spirit similar to \cref{thm:threecol}, see also~\cite{BRz19}. Moreover, the algorithm can be further generalized to solve  the \textsc{Maximum $\mathcal{H}$-Packing} problem for any fixed family $\mathcal{H}$ of graphs. In this problem we ask for a maximum-size (or, more generally, maximum-weight) set $X$ such that every connected component of $G[X]$ is isomorphic to some graph in $\mathcal{H}$. Again, we skip the technical details and refer the reader to~\cite{BRz19}.

\subsection{Finding induced subgraphs of bounded treewidth: \textsc{Min Feedback Vertex Set}}

Another way to look at \MIS is to find a maximum-weight induced subgraph of treewidth 0.
A natural next step is to look for a maximum induced forest, i.e., a subgraph of treewidth 1.
By complementation, this problem is equivalent to the \textsc{Min Feedback Vertex Set} problem, where we want to find a minimum-size (or minimum-weight) set which intersects all cycles.

A subexponential-time algorithm for \textsc{Min Feedback Vertex Set} in $P_t$-free graphs is known~\cite{NovotnaOPRLW19} and also involves branching on a high-degree vertex.
However, it has also one more step of exhaustive guessing the large-degree vertices that are not in the optimum feedback vertex set, and it is not clear how to avoid this.
It would be interesting to investigate if the methods of Gartland of Lokshtanov~\cite{GL20} or our approach could be used to obtain a quasi-polynomial-time algorithm for \textsc{Min Feedback Vertex Set} in $P_t$-free.


\bibliographystyle{abbrv}
\bibliography{references}

\begin{thebibliography}{10}

\bibitem{Alekseev82}
V.~Alekseev.
\newblock The effect of local constraints on the complexity of determination of
  the graph independence number.
\newblock {\em Combinatorial-algebraic methods in applied mathematics}, pages
  3--13, 1982.
\newblock (in Russian).

\bibitem{Al04}
V.~E. Alekseev.
\newblock Polynomial algorithm for finding the largest independent sets in
  graphs without forks.
\newblock {\em Discrete Applied Mathematics}, 135(1--3):3--16, 2004.

\bibitem{BacsoLMPTL19}
G.~Bacs{\'{o}}, D.~Lokshtanov, D.~Marx, M.~Pilipczuk, Z.~Tuza, and E.~J. van
  Leeuwen.
\newblock Subexponential-time algorithms for maximum independent set in
  ${P}_t$-free and broom-free graphs.
\newblock {\em Algorithmica}, 81(2):421--438, 2019.

\bibitem{BonamyDFJP19}
M.~Bonamy, K.~K. Dabrowski, C.~Feghali, M.~Johnson, and D.~Paulusma.
\newblock Independent feedback vertex set for ${P}_5$-free graphs.
\newblock {\em Algorithmica}, 81(4):1342--1369, 2019.

\bibitem{BRz19}
{\'{E}}.~Bonnet and P.~Rz\k{a}\.zewski.
\newblock Optimality program in segment and string graphs.
\newblock {\em Algorithmica}, 81(7):3047--3073, 2019.

\bibitem{BonomoCMSSZ18}
F.~Bonomo, M.~Chudnovsky, P.~Maceli, O.~Schaudt, M.~Stein, and M.~Zhong.
\newblock Three-coloring and list three-coloring of graphs without induced
  paths on seven vertices.
\newblock {\em Comb.}, 38(4):779--801, 2018.

\bibitem{CPPT}
M.~Chudnovsky, M.~Pilipczuk, M.~Pilipczuk, and S.~Thomass{\'{e}}.
\newblock On the maximum weight independent set problem in graphs without
  induced cycles of length at least five.
\newblock {\em {SIAM} J. Discret. Math.}, 34(2):1472--1483, 2020.

\bibitem{ChudnovskyPPT20}
M.~Chudnovsky, M.~Pilipczuk, M.~Pilipczuk, and S.~Thomass{\'{e}}.
\newblock Quasi-polynomial time approximation schemes for the maximum weight
  independent set problem in \emph{H}-free graphs.
\newblock In S.~Chawla, editor, {\em Proceedings of the 2020 {ACM-SIAM}
  Symposium on Discrete Algorithms, {SODA} 2020, Salt Lake City, UT, USA,
  January 5-8, 2020}, pages 2260--2278. {SIAM}, 2020.

\bibitem{DBLP:journals/corr/abs-2006-03009}
M.~Chudnovsky, S.~Spirkl, and M.~Zhong.
\newblock List 3-coloring {$P_t$}-free graphs with no induced 1-subdivision of
  {$K_{1,s}$}.
\newblock {\em Discrete Mathematics}, 343(11):112086, 2020.

\bibitem{CorneilLB81}
D.~G. Corneil, H.~Lerchs, and L.~S. Burlingham.
\newblock Complement reducible graphs.
\newblock {\em Discrete Applied Mathematics}, 3(3):163--174, 1981.

\bibitem{GL20}
P.~Gartland and D.~Lokshtanov.
\newblock Independent set on ${P}_k$-free graphs in quasi-polynomial time.
\newblock {\em CoRR}, abs/2005.00690, 2020.
\newblock Accepted to FOCS 2020.

\bibitem{GORSSS18}
C.~Groenland, K.~Okrasa, P.~Rz\k{a}\.{z}ewski, A.~Scott, P.~Seymour, and
  S.~Spirkl.
\newblock ${H}$-colouring ${P}_t$-free graphs in subexponential time.
\newblock {\em Discret. Appl. Math.}, 267:184--189, 2019.

\bibitem{GrzesikKPP19}
A.~Grzesik, T.~Klimo\v{s}ov\'{a}, M.~Pilipczuk, and M.~Pilipczuk.
\newblock Polynomial-time algorithm for maximum weight independent set on
  ${P}_6$-free graphs.
\newblock In T.~M. Chan, editor, {\em Proceedings of the Thirtieth Annual
  {ACM-SIAM} Symposium on Discrete Algorithms, {SODA} 2019, San Diego,
  California, USA, January 6-9, 2019}, pages 1257--1271. {SIAM}, 2019.

\bibitem{gyarfas}
A.~Gy\'arf\'as.
\newblock Problems from the world surrounding perfect graphs.
\newblock {\em Applicationes Mathematicae}, 19:413--441, 1987.

\bibitem{KR}
D.~Kobler and U.~Rotics.
\newblock Finding maximum induced matchings in subclasses of claw-free and
  {$P_5$}-free graphs, and in graphs with matching and induced matching of
  equal maximum size.
\newblock {\em Algorithmica}, 37(4):327–346, Dec. 2003.

\bibitem{LokshtanovVV14}
D.~Lokshtanov, M.~Vatshelle, and Y.~Villanger.
\newblock Independent set in ${P}_5$-free graphs in polynomial time.
\newblock In C.~Chekuri, editor, {\em Proceedings of the Twenty-Fifth Annual
  {ACM-SIAM} Symposium on Discrete Algorithms, {SODA} 2014, Portland, Oregon,
  USA, January 5-7, 2014}, pages 570--581. {SIAM}, 2014.

\bibitem{LozinM08}
V.~V. Lozin and M.~Milanic.
\newblock A polynomial algorithm to find an independent set of maximum weight
  in a fork-free graph.
\newblock {\em J. Discrete Algorithms}, 6(4):595--604, 2008.

\bibitem{Minty80}
G.~J. Minty.
\newblock On maximal independent sets of vertices in claw-free graphs.
\newblock {\em J. Comb. Theory, Ser. {B}}, 28(3):284--304, 1980.

\bibitem{NovotnaOPRLW19}
J.~Novotn{\'{a}}, K.~Okrasa, M.~Pilipczuk, P.~Rz\k{a}\.zewski, E.~J. van
  Leeuwen, and B.~Walczak.
\newblock Subexponential-time algorithms for finding large induced sparse
  subgraphs.
\newblock {\em Algorithmica}, July 2020.

\bibitem{ORz20}
K.~Okrasa and P.~Rz\k{a}\.{z}ewski.
\newblock Subexponential algorithms for variants of the homomorphism problem in
  string graphs.
\newblock {\em J. Comput. Syst. Sci.}, 109:126--144, 2020.

\bibitem{Sbihi80}
N.~Sbihi.
\newblock Algorithme de recherche d'un stable de cardinalite maximum dans un
  graphe sans etoile.
\newblock {\em Discrete Mathematics}, 29(1):53--76, 1980.
\newblock (in French).

\end{thebibliography}
\end{document}